\newcommand{\remove}[1]{}
\newtheorem {theorem}{Theorem}[section]
\newtheorem {lemma}[theorem]{Lemma}
\newtheorem {proposition}[theorem]{Proposition}
\newtheorem {corollary}[theorem]{Corollary}
\newtheorem {definition}[theorem]{Definition}
\newtheorem {example}[theorem]{Example}
\newtheorem {fact}[theorem]{Fact}
\newcommand{\ignore}[1]{}
\renewcommand\th{^{\text{th}}}
\newcommand\R{\mathbb{R}}
\def\fast {\mathit{fast}}
\def\part {\mathit{part}}
\def\view{\mathit{view}}
\def\Runs{\mathcal{R}}
\def\A {\mathbb{A}}
\def\s {\mathbf{s}}
\def\t {\mathbf{t}}
\def\Chr{\operatorname{Chr}}
\def\WF{\mathit{WF}}
\def\Res{\mathit{Res}}
\def\OF{\mathit{OF}}
\def\O {\mathcal{O}}
\def\I {\mathcal{I}}
\def\slow{\mathit{slow}}
\def\V{\mathcal{V}}
\def\adv{\operatorname{adv}}
\newenvironment{proof}[1][Proof]{\noindent\textbf{#1.} }{\hfill $\Box$\\[2mm]} 
\newfont{\mycrnotice}{ptmr8t at 7pt}
\newfont{\myconfname}{ptmri8t at 7pt}
\newtheorem{theorem}{Theorem}
 \newtheorem{definition}[theorem]{Definition}
 \newtheorem{lemma}[theorem]{Lemma}
 \newtheorem{proposition}[theorem]{Proposition}
\newenvironment{proof}[1][Proof]{\noindent\textbf{#1.} }{\hfill $\Box$\\[2mm]} 
\newcounter{linenumber}
\def\A{\ensuremath{\mathcal{A}}}
\def\R{\ensuremath{\mathcal{R}}}
\def\I{\ensuremath{\mathcal{I}}}
\def\O{\ensuremath{\mathcal{O}}}
\def\V{\ensuremath{\mathcal{V}}}
\newcommand{\remove}[1]{}
\newcommand{\ignore}[1]{}
\begin{document}

\bibliographystyle{plain}


\title{Set Consensus: Captured by a Set of Runs with Ramifications}

\author{
Eli Gafni$$
\\
\\
\\
\large $$ UCLA\\
eli@ucla.edu\\
}

\date{}

\maketitle

\begin{abstract}

\noindent Are (set)-consensus objects necessary? This paper answer is negative.

We show that the availability of consensus objects can be replaced by restricting the set of runs we consider.
In particular we concentrate of the set of runs of the Immediate-Snapshot-Model (IIS), and given the object we identify this restricted subset of IIS runs.

We further show that given an $(m,k)$-set consensus, an object that provides $k$-set consensus among $m$ processors, in a system of $n$, $n>m$ processors, we do not need to use
the precise power of the objects but rather their effective cumulative set consensus power. E.g. when $n=3,~m=2,$ and $k=1$ and all the 3 
processors are active then we only use 2-set consensus among the 3 processors, as if 2-processors consensus is not available. We do this until at least one of the 3 processors obtains an output.
We show that this suggests a new direction in the design of algorithms when consensus objects are involved.

\ignore{
This paper presents Three contributions:\\
\begin{enumerate}
\item In the quest to show that ``objects'' can be captured by restrictions on runs, it constructs a subset of the runs of the 
Iterated-Immediate-Snapshot model (IIS), $\cal{RMK}$, that is equivalent in its power to solve tasks, to the Single-Writer Multi-Reader (SWMR)
shared memory model equipped with $(m,k)$-set consensus objects, $MK$.

\item In proving the equivalence, it establishes a ``compiler'' that lets programs written for $MK$ be run in 
$\cal{RMK}$. Thus, as a corollary it establishes that IIS model with set consensus objects is equivalent to the SWMR
with the same objects. This is another step in the quest to show that SWMR equipped with any object is equivalent to IIS equipped with the same
objects. Furthermore, the compiler built uses the objects to extract their cumulative power, and then use only this cumulative power. That is if $2j$ or $2j-1$ processors
use 2-processors consensus, the compiler treats it as if $j$-set consensus is only available.  

\item Motivated by the compiler, the paper shows two elementary examples of developing algorithms, 
for Test-and-Set and Tight-Renaming, to be run directly on $\cal{RMK}$ that corresponds to 2-processors consensus ($(2,1)$-set consensus). 
That is, algorithms that use only the cumulative power of the objects.
This opens the way to replace  
all known algorithm that use 2-processors consensus, with ``wait-free algorithms'' of limited concurrency.
It is expected that such algorithms will be conceptually simpler than current algorithms, i.e. algorithms that use 2-processors consensus (e.g. Test-and-Set)
when 3 processors are active.
\end{enumerate}
}
\ignore{
A task is solvable in $MK$, if and only if it is solvable by runs in $\cal{RMK}$.
The Thesis postulates that this can be done for any object or task.

We show a ramification that any read-write program that calls on $(m,k)$-set consensus objects  in
the body of threads can be executed by relegating the objects to the ``operating system.''  
The call on an object is replaced by a \emph{wait-statement}, but nevertheless the ``operating system''
guarantees progress.

Finally, we consider designing protocols without a wait-statement taking into account this kind of ``operating system.'' We show a simple example how this ``new design paradigm'' results in unexpected algorithm in the spirit of wait-free algorithms.
}

\end{abstract}






\newpage

\section{Introduction}

We present 3 contributions:
\begin{enumerate}
\item Showing equivalence between $MK$, the SWMR system equipped with $(m,k)$-set consensus, and an IIS subset of runs $\cal{RMK}$,
\item Show how to compile programs written for $MK$ to be run on $\cal{RMK}$, and
\item Use the observation of how the this compiler works to design algorithms with $\cal{RMK}$ in mind, instead of the explicit objects.
The paper illustrates this possibility on two elementary examples. This open the possibility of a ``new design paradigm'' that reduces ``designing with objects'' with ``designing with reduced concurrency.''
\end{enumerate}

The first two contributions are theoretical in nature. We show their feasibility without worrying about the complexity of the various implementations. These contributions are small steps in a rather ambitious agenda, and it makes the agenda more credible. The agenda involves establishing the following: 

\begin{enumerate}
\item Any interesting distributed computing model can solve $\epsilon$-agreement (there is no interesting model below read-write wait-free),
\item Any interesting distributed problem, at the possibility level, can be captured by a network of  \emph{tasks} (tasks are the functions of distributed computing), 
\item Any interesting distributed system is equivalent to some subset of runs of the wait-free IIS model (read-write and the knowledge of the possible run captures any possible ``knowledge-relation'' among processors),
\item Any question about the solvability of a task $T$ in a (sub-IIS) model $M$ can be reduced to the question of solvability of a task $T(M)$ wait-free (It is all wait-free).
\end{enumerate}

The contribution to the agenda is in showing that indeed the $MK$ model (a very interesting distributed system), corresponds to a subset of runs of the IIS.

The second less theoretical contribution is in a ``new algorithm design paradigm'' that emerges from the theoretical results: Any task solvable with set consensus can be solved by a ``wait-free'' algorithm tuned to the variable concurrency provided by the objects.

There are three main technical ideas involved:

\begin{enumerate}

\item Describe the availability of $(m,k)$-set consensus as an \emph{affine} task \cite{gact} $T$ (a task posed as a subcomplex of a subdivided simplex). 
With this task every processors can determine $k$-set consensus value for and combination of $m$ processors out of the $n \choose m$ possible. 
Then we approximate $T$ (using the colored simplicial approximation theorem \cite{HS,BG97}) by a subset of IIS runs.
This is done by completing $T$ to a subdivided simplex, approximating it, and choosing runs that land in $T$. 

\item Show that with $(m,k)$ set consensus $j$ processors can implement $k \lfloor j/m \rfloor + min(k,j \mod m)$-set consensus. Use this in conjunction of the simulation of SWMR
on IIS \cite{BG97,SERGIO-SIM}, the Generalized State-Machine Replication (GSMR) method \cite{SM}, and the RC simulation \cite{RC}, to run programs written for $MK$ to run on $\cal{RMK}$. 
The GSMR converts $MK$ to $n$ read-write threads 
progressing with maximum asynchrony of  $k \lfloor j/m \rfloor + min(k,j \mod m)$, where $j$ is the cardinality of the set of the active processors, processor that arrived with an input but did not obtain an output as yet. The $n$ fixed threads work as a BG \cite{BG,BGSERGIO} processors. The synchrony among them, as shown in the RC simulation, allows to do away with the Extended-BG simulation \cite{EBG} used for colorful tasks. 
Finally, the use of $(mk)$-set consensus in GSMR is replaced by the fact that we run in $\cal{RMK}$, while GSMR has iterated structure too; it evolves in rounds that use
``fresh'' variables in each round. This allows to simulate a GSMR round with a fixed number of iterations of $\cal{RMK}$. This extends the equivalence between IIS and and SWMR-SM from wait-free \cite{BG97,SERGIO-SIM}, 0-1 tasks \cite{01,SERGIO-SIM}, and $t$-resiliency \cite{FAST}, to consensus objects.

\item From the theory above it is clear that programs that call on $(m,k)$-objects can be compiled to run without the objects over $\cal{RMK}$. Can one write program ``directly'' with $\cal{RMK}$ in mind without ``cheating'' and using the compiler? We speculate that when the question is precisely formulated the answer will be positive. We show two rather simple example of algorithms that do not ``cheat.''
\end{enumerate} 

Since the quest to make the paper self contained was judged as hopeless, the traditional Model section section is abbreviate from \cite{gact} and put in the appendix.
In the following section we elaborate on the three points: 1. The idea behind getting $\cal{RMK}$ from IIS with $(m,k)$-objects, 2. How to compile MK programs to run on $\cal{RMK}$, and 3. Examples of direct design for $\cal{RMK}$. The next section goes into the details of the idea presented in 1 above, and the next section gives the technical detail of how to get the cumulative set-consensus power of $(m,k)$-objects. Finally the obligatory Conclusions.

\section{ Elaboration on the technical ideas} 

\subsection{A subset of $n$-processors IIS runs that solves $(m,k)$ set consensus among any $m$ processors, $n \geq m$, and is implementable by IIS with $(m,k)$-objects}

The (soft-wired \cite{OBJECTS}) object of  $(m,k)$-set consensus can be invoked by $m$ processors and to each invocation it returns one of the inputs provided to it by some invocation, such that at most $k$ distinct invocation  values are returned in total \cite{chaudhuri}.

It is known what is a $\cal{RMK}$ for $(2,1)$-objects. It is the iteration of the task IS \cite{IS} where in each iteration a processor returns a unique snapshot \cite{Common2} (The Distinct IS task). How do we do it in general for $(m,k)$-objects? In \cite{INSTANCY}, $k$-processors consensus is established as distinct IS task in which in addition a processor ``knows'' the $k-1$ ids preceding it
in the implicit order of processors of the task. But this author does not know how to translate ``know'' into IIS subset.

This is the main thrust of the paper. It builds on an ideas from \cite{2AD} and \cite{gact}. The paper in \cite{2AD} presents an elementary proof that the protocol complex of the read-write wait-free system constitutes a subdivided simplex. It does this by considering a sequence of pairwise interaction by processors, one pair per round. After all pairs are scheduled in distinct rounds, we obtain a composite chunk of rounds that now repeats ad infinitum.

The one-time pairwise interaction between two processors  in \cite{2AD} results in three possible outcomes for the pair, outcome 1, 2, and 3. Outcome 1 and 2 share a processor that has the same local view in both, and similarly outcomes 2 and 3. If we wanted to capture the system of read-write wait-free equipped with 2-processors consensus, we just do as above eliminating outcome 2. We will get a sub-complex of the subdivided simplex that we have obtained in the wait-free case. This simple observation drives the rest of the subsection.

We generalize the $(2,1)$ idea above to $(m,k)$, to get to built an affine complex  that solves $T$.

Finally, we do want to capture runs as a subset of a single model, which we chose to be the IIS. For that we show how
to replace a composite chunk of the rounds we described with an equivalent chunk from IIS. We use the idea from \cite{gact} that every \emph{affine} task, that is, a task that is a sub complex of a subdivided simplex, or for that matter any sub-complex of a chromatic subdivided simplex, can be equated with a set of IIS run. The composite chunk we have produced is an affine task $T$.

To capture any affine task as subset of IIS runs, we notice every colored subdivided simplex can be approximated by enough iteration of IIS \cite{HS,BG97} we complete the affine task to colored subdivided simplex, approximate it, and chose the prefixes of runs that fall into $T$.

The iterations of this chunk ad infinitum is $\cal{RMK}$

\subsection{$\cal{RMK}$ can run $MK$ Programs }

Let $\Pi$ be a program in $MK$, i.e. threads of reads and write to SWMR shared-memory, with the threads invoking $(m,k)$-set consensus objects. How do we execute $\Pi$
in $\cal{RMK}$? Processors in $\cal{RMK}$ can solve $(m,k)$-set consensus, but how do they coordinate local states calling on a copy of $(m,k)$ to know what is the outcome since
the virtual call to $(m,k)$ in $\cal{RMK}$ happens in a single round, while when the call to $(m,k)$ in $\Pi$ takes place by the processors simulating $\Pi$, at different rounds of the simulation?
Conceptually, the answer is simple: Just the first round in which the object is invoked matters. Latter processor will adopt a value from first round processors.
The implementation of this simple idea, unfortunately, requires heavy machinary.

We draw on 3 simulations: Simulating SWMR-SM on IIS \cite{BG97,SERGIO-SIM}, simulating free-for-all execution that builds on the replicated multi-state-machine in \cite{SM} (GSMR), and finally drawing on the companion submission called RC simulation that replaces the EBG simulation \cite{EBG}, by considering constant number $n$ of BG simulators but increasing and reducing their concurrency.  

We elaborate on each of these simulations in turn. The view from 20,000 feet is as follows: Processors run in $\cal{RMK}$ and drive a GSMR system. The role of a processor is to get its input into the simulation so that its thread can be executed. It is ignored (simulated as departed) once its thread has an output. The GSMR just gives steps to $n$ BG simulators. The less processors there are or the higher the power of consensus they have the higher the synchrony of the BG simulators. The $n$ BG simulators through the RC mechanism execute $\Pi$ (we need RC since we do not run BG in the traditional way of wait-free simulators but as simulators with certain level of synchrony).

\subsubsection{Simulating SWMR-SM on IIS}

Our target machine in this paper is a subset of of runs of the iterated system IIS. At the first step we would like to run the GSMR replication system of \cite{SM} in IIS. The replication system was written for SWMR-SM, but luckily it has a round structure.
At the beginning of a round all processors invoke set-consensus and then communicate within a SWMR mechanism.
The crucial observation is that from round to round GSMR uses ``fresh'' variables. Thus the simulation of a round of GSMR takes a fixed number of rounds in IIS: A fixed number of rounds of $\cal{RMK}$ (the chunck) to get the set consensus required, and then a fixed number of rounds to simulate the GSMR communication in a round (posting proposals, doing Commit-Adopt, etc.). 

Thus, our run in $\cal{RMK}$ is an alternating fixed size chunks of solving set consensus followed by a fix size chunk of rounds to simulate the read-write round of GSMR, solving set-consensus, etc.

\subsubsection{GSMR as a Threads Execution Model}

The scheme proposed in the Concur paper \cite{SM}, shows how to generalize the single State-Machine approach to distributed computing \cite{Lamport-clock} using consensus, to the case of $(\infty,k)$-set consensus, in short $k$-set consensus.
The state-machine approach \cite{Lamport-clock} shows how using consensus processors can coordinate to replicate a linear order of proposed \emph{commands}.
The GSMR assumes processors want to place commands on $k$ distinct machines. It shows how using $k$-set consensus they can replicate putting commands on these $k$ machines,
with progress guarantee that at least the placing of commands on at least one machine will progress.

A trivial Corollary of the technique behind GSMR in \cite{SM} shows that with $k$-set consensus, $n$ processors can place commands on $n$ state-machines with guaranteed progress on $n-(k-1)$
machines.

Thus, with $k$-set consensus $n$ processors can simulate $n$ threads where in each round at least $n-(k-1)$ of the threads advance. Thus if $k$ is small relative to $n$, the scheme simulates
an execution with high level of synchrony. With consensus, the execution of all the $n$ threads will be synchronous!

The main innovation of this section is to consider the state-machines to be read-write threads.
Processors running GSMR read their local replica, which may lag, or be ahead of another replica.
Based on their local read, they propose this value as a command for all the next read steps of threads (the writes will be inferred from the value of the read).
All these possibly distinct read values are proposed. Any one of them decided for a thread is a valid read value
of the tread since the threads read asynchronously. Thus the idea is to use GSMR as a execution scheme for read-write threads,
similar to the logic of the GB simulation. 

But the threads of $\Pi$ we are given are not only of read-write threads, they also invoke $(m,k)$-objects.
The next idea is to replace each $(m,k)$-object in $\Pi$ with a BG safe-agreement (SA) task \cite{}.
A solution to an SA task (see appendix) is read-write with a \emph{await(condition)} statement.
The burden we have is to show that these await statements will still allow progress of at least one thread.
When that thread will output, $\cal{RMK}$ processor associated with it (brought its input) will be simulated as departed,
the synchrony of GSMR will increase, the concurrency of executing $\Pi$ will hopefully decrease, and this will allow another thread to progress. 

Who are these $n$ threads we simulate? We do not simulate directly the threads of $\Pi$. GSMR is built with a fixed set of threads (state-machines) in mind. 
The effective set-consensus $\cal{RMK}$ that drives GSMR provides is implicit rather than explicit as it depends on the number of virtual arrival and virtual departure of
processors. This will affect the number of state machines that will progress. To do away with this complication GSMR runs a fixed number of threads $n$ of processors
that behave like BG simulators: They run all over $\Pi$ with some rule determining which thread of $\Pi$ can advance as some SA's are ``waiting.''

We could run a variable number of threads according to getting a GSMR thread that can progress.
This variable number of threads is a problem, since when synchrony grows and the number of threads shrink, threads that were active before and are not
active now may interfere with an SA task. In the past this problem was solved by the Extended-BG simulation \cite{EBG}. Here we solve it in a more elegant 
way by fixing the number of BG simulators to $n$, and letting the synchrony change.
Running the BG simulation with partially synchronous BG simulators, something that have not been done before, is described in a companion submission
under the name RC-simulation. The RC-simulation changes the BG scheme by determining that an SA is \emph{blocked} \cite{BG} only after some delay
to let live simulators have a chance to terminate their execution of the core of the SA (all but the await statement).
Thus, the RC-simulation \cite{RC} is an elegant substitute to EBG \cite{EBG}

\subsubsection{The RC Simulation}

The crux of the RC simulation was explained above. In more detail, suppose we have $n$ BG simulators with at most one fault.  We know \cite{BG,BGLR} that any number
of processors with at most a single possible fault is effectively two processors. The original BG simulator converts the above to let BG simulator number 1 and 2 take steps and all the rest
``skip.'' At least one of BG simulators 1 and 2 will take step. To do this we need to do the number 2, i.e. that at most a single processor might fail. The RC simulation lets all take steps. All will go to some SA, one will finish the core of the SA first without knowing the outcome.
Should it proceed to another SA? May be all the processors are alive and synchronous but they have started the SA at different times. If it will proceed to another SA the concurrency of the
execution of $\Pi$ might grow un-necessarily (and say, in the case of solving Renaming \cite{AR} will require more space than necessary). The crux of the RC simulation is to show that
if the decision to proceed to the next SA is delayed enough (as a function of the number of the active SA's) then if a simulator does proceed it is accounted for by one simulator being too
slow. I.e. the execution was not completely synchronous.

In case of at most one faults, after two SA's are active, the delay guarantees that at least one SA of the two will terminate. 

\remove{
Recall the difficulty that the Extended-BG (EBG) simulation solved in \cite{EBG}. There, all real processors take the role of BG simulators.
As long as the number of simulators is smaller equal to the number of threads we obtain progress. But teh number of executable threads of a task decline as the threads terminate. The processor associate with the thread should depart, and thus we get a comensurate reduction in the number of threads and the number of BG simulators.

The problem is that the processor that should depart, might be blocking a safe agreement, and might have crashed. The EBG scheme solves the problem by providing a mechanism to ``kick-out''  the offending processor.

In the GSMR we use we do not know what total set consensus power we have at a round. It happens automatically. If we wanted to use the origial version of GSMR to simulate some number of threads with a guarantee that at least one will advance we do not know this number. In teh original scheme it was assumed that this number is given - $k$. Even if we could find a mechnism to find the effective $k$, when consensus power grows the effective $k$ reduces. Then we will face the problem for which Extend-BG was invented \cite{EBG} - how to kick-out an unnecessary simulator, safely. Instead of going through such a path the RC simulation comes in handy.

The RC simulation lets all processors be BG simulators. For the problem of the EBG to be eleviated once a thread terminated, the synchrony should increase. If we could simulated the BG processors so that a crash of a real processor will increase synchrony, this problem will not occur. This is exactly how we use the RC simulation here.
The GSMR simulates BG processors. When the number of real processors taking steps in GSMR drops, the synchrony 
of the simulated BG processors goes up! Thus, the RC simulation does away with EBG.

Second, the RC simulation transforms teh synchrony into teh execution of few $\Pi$ threads whose number is related to teh asynchrony of the BG processor. When the asynchrony remain the same and one of teh threads terminate it is  up to a BG processor to determine who is the new thread to activate.

Thus, RC is a BG simulation in disguise. All $n$ threads run by GSMR are simulators, because of synchrony effectively they amount to small number of simulators. RC extract this small number effectively by changing the rule of when a safe agreement is blocked. It now involves delay - to let simulators enough time to get out of the safe consensus. If by some delay teh safe agreement has not been resolved, it is on account of some processor that contributes to teh asynchrony.
}
\subsubsection{Executing Threads that Invoke $(m,k)$-objects}
Now that we have reduced the execution to executing by BG processor in an RC simulation we use what we prove later that with 
$(m,k)$-set consensus $j$ processors can solve $k \lfloor j/m \rfloor + min(k,j \mod m)$-set consensus. This will be the effective number of BG processors in the RC simulation. Since the SA for every $(m,k)$-set consensus needs at least $k+1$ simulators to be siting in a middle of the safe agreement code, we get that at least one BG simulators can find a thread to execute.

\subsection{New-Line of Algorithms Design}

As mentioned in the subsection above, with $(m,k)$-set consensus $j$ processors can solve cumulative set consensus: $k \lfloor j/m \rfloor + min(k,j \mod m)$-set consensus. Take $m=2$, $k=1$. The object is now 2-processors consensus.

The most elementary task solvable by $n$ processors is Test-and-Set (TST). In TST one of the participants outputs ``win'' while the other output ``lose.'' What if at any point in the execution all we have is the cumulative set-consensus power of the 2-processors consensus? Can we do TST? Of course we can, as we can take any TST implementation and run it through the compiler we described. But then we replace objects with safe agreement etc. Can we do it directly? Of course this question is not formalized, but we will rely on Supreme-Court judge Potter Stewart saying: ``I know it when I see it''

It is elementary for TST. Processors do cumulative set consensus and write the id they obtained in shared-memory. A processor that afterwards does not see its id written outputs ``lose'' and depart (virtually). A processor that arrives late and sees any id written, outputs ``lose.'' Continue inductively with processors that saw their id written in shared-memory. Notice that this solution is linearazible \cite{HW}.

All algorithms in Common2 \cite{Common2,SWAP} ``TSTs everything that moves.'' It feels like the use of TST requires different mind-set than wait-free. Indeed, the group involved in Common2 over the years \cite{Common2,SWAP} seem to be the same ``old-hands.'' People who developed intuition in the use of TST. Lets therefore take the second most elementary
task solvable by 2-processors consensus: Tight-Renaming \cite{TIGHT}.

In Tight-Renaming each of $k$ participating processors outputs a unique integer in the range 1 to $k$. The standard TSTed way to solve it is for processors to TST the integers 1, 2,$\ldots$ in order with the winner outputting the integer it won. Can we accomplish the same using only the cumulative set-consensus  power of 2-processors consensus.

If the number of arrivals is $2k$ or $2k-1$ the cumulative set consensus power will narrow it to at most $k$. These at most $k$ processors can now solve Adaptive-Renaming \cite{AR} in the available range of at least 1 to $2k-1$. When one processor outputs it writes its output in shared memory and depart. The rest of the processors continue inductively using the integers that were not claimed by being written to shared memory. This solution is not linearizable.

This solution to tight-renaming illuminates how the ``wait-free logic'' of Adaptive-Renaming \cite{AR} spills over to the same problem type, when 2-processors consensus is available. It will be interesting to push the wait-free-logic to Fetch-and-Add and SWAP \cite{Common2,SWAP}. More importantly it will be interesting to formalize the question and actually prove it can always be done.

\section{Constructing $\cal{RMK}$}

We first present the task $(m,k)$-set consensus among any $m$ processors as an \emph{affine} complex $C(n,m,k)$, where $n$ denotes the number of processors. An Affine complex is a subcomplex of a chromatic finitely-subdivided simplex $A$.
To create $C(n,m,k)$ we first describe how we create $C(m,m,k)$, i.e. what is the subcomplex we talk about when the number of processors $n=m$. The complex   $C(m,m,k)$ is a subcomple
of $Chr^2 ({\bf s^{m-1}})$, the second standard chromatic subdivision of the $m-1$-dimensional simplex ${\bf s^{m-1}}$. To get $C(m,m,k)$ we purge from $Chr^2 ({\bf s^{m-1}})$ all the simplexes that are not part 
of an elementary $m-1$-dimensional simplex that has at least one vertex on a face of $Chr^2 ({\bf s^{m-1}})$ of dimension $k-1$. I.e. we hallow out $Chr^2 ({\bf s^{m-1}})$ of all simplexes that do not touch a $k-1$-dimensional face.
The observation we make leaving out the proof (as its straight forward but messy) is that any of the remaining simplexes every vertex not on a $k-1$-dimensional face has no two vertices on it link that touch two distinct  faces of dimension $k-1$ or less (this will not be true for the first subdivision). Thus every remaining simplex ``identifies'' exactly a single smallest face
of dimension $k-1$ or less.

\begin{lemma}
Consider the runs that corresponds to $C(m,m,k)$, then in a model of these runs we can solve $k$-set consensus among $m$ processors, and $C(m,m,k)$ when considered as a task is solvable in a read-write wait-free $m$ processors SWMR memory with access to $(m,k)$-set consensus.
\end{lemma}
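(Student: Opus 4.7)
The statement asserts two directions of an equivalence between $C(m,m,k)$ (as both a set of runs and a task) and the $(m,k)$-set consensus object. I would prove each direction separately, leveraging the structural observation already noted: every top-dimensional simplex $\sigma$ of $C(m,m,k)$ identifies a unique smallest face $F(\sigma)$ of $\s^{m-1}$ of dimension at most $k-1$, and by the link condition this assignment is consistent on intersections of simplexes.

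For the first direction, I would exhibit a chromatic simplicial decision map from $C(m,m,k)$ to the output complex of $k$-set consensus among $m$ processors. For each vertex $v$ of $C(m,m,k)$, define $F_v$ to be the smallest face of $\s^{m-1}$ such that every top-dimensional simplex of $C(m,m,k)$ containing $v$ touches $F_v$; the structural observation guarantees that $F_v$ is well defined and that $F_v \subseteq F(\sigma)$ for every simplex $\sigma \ni v$. The processor whose local state corresponds to $v$ then decides a canonical input associated with $F_v$, for instance the input of the lowest-indexed processor identified by $F_v$. For any simplex $\sigma \in C(m,m,k)$ arising in a run, all decisions come from IDs on $F(\sigma)$, a face with at most $k$ vertices, so the total number of distinct decisions is at most $k$; validity is immediate since each decision is an actual proposed input.

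For the second direction, I would design a read-write wait-free protocol that uses a single copy of $(m,k)$-set consensus: each processor $p$ first invokes the object with its own ID, receiving a value $v_p$; then writes $v_p$ into SWMR and takes a snapshot; then runs two rounds of immediate snapshot; and finally computes a vertex of $\Chr^2(\s^{m-1})$, adjusted so that at least one processor whose ID belongs to $V := \{v_q\}$ ends at a vertex of $\Chr^2(\s^{m-1})$ whose carrier in $\s^{m-1}$ is contained in the face $F$ spanned by $V$. Since $|V| \leq k$, the face $F$ has dimension at most $k-1$, so by the definition of $C(m,m,k)$ the collective output simplex lies in $C(m,m,k)$.

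The main obstacle is ensuring that the final adjustment step can always be performed wait-free: specifically, that at least one processor whose ID lies in $V$ can reliably be made to occupy a vertex with carrier inside $F$, independently of the scheduling of the other processors. My plan is to use the recorded values $v_p$ as signals --- each processor can, after the SWMR snapshot, recognize whether it lies in $V$ --- and then let processors in $V$ coordinate via an auxiliary safe-agreement-style mechanism to designate one whose view in the 2-round IS is restricted to $V$. Showing existence of such a processor for every interleaving, and verifying that the resulting assignment is chromatic simplicial, is the delicate step; it reduces to a chromatic simplicial approximation argument specialized to $C(m,m,k)$, closely parallel to the Herlihy--Shavit style arguments used elsewhere in the paper's affine framework.
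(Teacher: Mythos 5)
Your first direction is essentially the paper's argument: after the two immediate-snapshot rounds a processor holds a vertex of its color in $C(m,m,k)$, and it decides an id taken from the unique smallest face of dimension at most $k-1$ identified by the simplexes containing that vertex; the structural (link) observation gives well-definedness and the bound of $k$ distinct decisions, exactly as in the paper (both you and the paper leave that structural property unproven). No issue there.

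The gap is in your second direction, and it sits precisely at the step you yourself flag as ``delicate.'' After the $(m,k)$-object narrows the processors to a set $V$ of at most $k$ ids, i.e.\ to at most $k$ corners spanning a face $F$ of dimension at most $k-1$, you propose to run two plain rounds of immediate snapshot and then ``adjust'' so that some processor in $V$ ends at a vertex of $\Chr^2({\bf s}^{m-1})$ carried by $F$. But the carrier of a processor's vertex after two IS rounds is determined by the set of processors it actually saw; a processor cannot retroactively ``restrict its view to $V$'' without destroying the simplicial (and chromatic) structure of the collective output, and a safe-agreement-style designation of a winner does not produce such a vertex --- it only elects a processor, it does not relocate it. The paper closes this step with a different and standard primitive: the processors take their at most $k$ agreed corners as \emph{starting points} and run the convergence (simplex agreement) algorithm of \cite{HS,BG97}. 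There, the first concurrency class sees only inputs lying on $F$ and therefore outputs vertices carried by $F$, so the resulting elementary simplex of the subdivision touches $F$ and hence lies in $C(m,m,k)$, wait-free and for every interleaving. Your plan would need to either reinvent this convergence argument or invoke it explicitly; as written, the existence claim for ``a processor whose view is restricted to $V$'' is false for general schedules, and the proof does not go through without replacing that step.
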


\begin{proof}\\
$\Rightarrow$\\
For every run in a simplex of of $C(m,m,k)$ after two Immediate Snapshots a processor obtains a vertex of its color in $C(m,m,k)$. It then returns an id from the smallest cardinality face of $Chr^2 ({\bf s^{m-1}})$ of all the simplexes which contains it. By the unproven property we skipped the cardinality of the output set is $k$ or less.\\
$\Leftarrow$\\
Processor use $(m,k)$ set consensus to determine at most $k$ corners (0-dimensional faces) of $Chr^2 ({\bf s^{m-1}})$.  They then execute convergence \cite{HS,BG97} algorithm from these corners to return a simplex of $C(m,m,k)$.
\end{proof}

To construct $C(n,m,k)$, we consider all the $n \choose m$ combinations of $m$ processors in some order, $comb_1, \ldots, comb_{n \choose m}$.  We take the $n-1$-dimensional simplex ${\bf s^{n-1}}$ and take
the face that corresponds to $comb_1$. We subdivide it according to $C(m,m,k)$ and cone-off this subdivision with the rest of the vertices not in $comb_1$. Now we got a complex which is a subcomplex of a colored 
subdivided simplex. We take all the faces of elementary simplexes in the subdivision that correspond to $comb_2$. We subdivide each such face according to $C(m,m,k)$ and then in each simplex cone this subdivision off with the rest
of the vertices. We continue this for all combinations to get $C(n,m,k)$. 

$C(n,m,k)$ can be completed to a colored subdivision of ${\bf s^{n-1}}$. The completion viewed as a wait-free solvable task there exist $q$ such that $Chr^q  {\bf s^{n-1}}$ approximates the task.
We now consider all the simplexes of $Chr^q  {\bf s^{n-1}}$ that land in $C(n,m,k)$, to be the first $q$ rounds of runs in $\cal{RMK}$, denoted $\cal{RMK}_q$.

\begin{lemma}
Consider the runs that correspond to $\cal{RMK}_q$ , then in the model of these runs we can solve $(m,k)$-set consensus among all $n$ processors, and $\cal{RMK}_q$ when considered as a task is solvable  in a read-write wait-free $n$ processors SWMR memory with access to $(m,k)$-set consensus.
\end{lemma}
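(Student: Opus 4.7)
The plan is to reduce both directions to the previous lemma (the $n=m$ case), leveraging the recursive construction of $C(n,m,k)$ through successive cone-offs over the combinations $comb_1, \ldots, comb_{\binom{n}{m}}$. The structural parallel to the previous lemma is strong enough that I expect the same two-step scheme to go through, with an extra induction over the combinations.

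For the $(\Rightarrow)$ direction, observe that a run in $\cal{RMK}_q$ lands (after $q$ IIS rounds) in an elementary simplex of $C(n,m,k)$, since by construction $\cal{RMK}_q$ consists exactly of those prefixes of IIS runs whose $Chr^q(\mathbf{s}^{n-1})$-simplex falls inside $C(n,m,k)$. To implement $(m,k)$-set consensus among a specific combination $comb_j$, the $m$ participating processors locate within their elementary simplex the sub-simplex that arose from subdividing the $comb_j$-face according to $C(m,m,k)$ at step $j$ of the construction; processors outside $comb_j$ were simply coned-off and play no role. Applying the reading rule of the previous lemma — return an id from the smallest face of $Chr^2(\mathbf{s}^{m-1})$ that contains the processor's vertex — yields outputs drawn from at most $k$ distinct ids, which is the required $k$-set consensus guarantee.

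For the $(\Leftarrow)$ direction, I would solve the task $\cal{RMK}_q$ in SWMR with $(m,k)$-set consensus inductively, step by step along the construction of $C(n,m,k)$. At step $j$, the $m$ processors in $comb_j$ invoke a fresh $(m,k)$-set consensus object and run the convergence protocol of the previous lemma to obtain a vertex in the $C(m,m,k)$-subdivision of their current face; the other processors perform the cone-off in SWMR by writing their own identities and reading. After all $\binom{n}{m}$ combinations have been processed, the processors have collectively located a vertex in $C(n,m,k)$. Since $C(n,m,k)$ can be completed to a colored subdivision of $\mathbf{s}^{n-1}$ that is approximated by $Chr^q(\mathbf{s}^{n-1})$, a final application of the wait-free chromatic convergence algorithm~\cite{HS,BG97} produces an output vertex in $\cal{RMK}_q$ as a task.

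The main obstacle is ensuring the successive cone-offs compose correctly: one must verify that when step $j+1$ subdivides faces corresponding to $comb_{j+1}$ inside the complex already produced at step $j$, the smallest-face reading rule of each invoker still identifies a face of dimension $\leq k-1$, and that independent invocations on different combinations do not interfere. This amounts to iterating the geometric observation asserted but not proved in the $n=m$ lemma through all $\binom{n}{m}$ cone-offs, and is the technically delicate part of the argument; everything else is a direct lifting of the previous lemma's two directions.
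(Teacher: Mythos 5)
Your proposal follows essentially the same route as the paper: both directions reduce to the recursive cone-off construction of $C(n,m,k)$ and the $n=m$ lemma, with the forward direction identifying the $comb_j$ sub-simplex inside the elementary simplex and the reverse direction simply re-running the construction in SWMR with the $(m,k)$-objects. The paper's own proof is a two-sentence version of exactly this; your write-up is more detailed and, to your credit, explicitly flags the composition of the cone-offs as the unverified geometric step, which the paper also leaves unproved.
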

\begin{proof}
Since each simplex of $\cal{RMK}_q$ resides in a simplex of  $C(n,m,k)$ we can just identify back the process of the subdivision we described above and in turn each processor can answer its choice of value for each $m$ combination it belongs to. In the opposite direction our construction of $\cal{RMK}_q$, just used $(m,k)$-objects and read-write.
\end{proof} 

To create $\cal{RMK}$ we now iterate $\cal{RMK}_q$ ad-infinitum.

\ignore{
It is not too difficult to see that vertices of simplexes in $C(n,m,k)$ output atomic snapshots \cite{ATOMIC}. Each $p_i$ outputs a snapshot $S_i$ which is the carrier of the face of ${\bf s^{n-1}}$ that carries the vertex of its color, and the carriers
of simplexes of $Chr^k  {\bf s^{n-1}}$ are related as atomic snapshots. Moreover, obviously, by the logic of our process of creating $C(n,m,k)$, in none of the $n \choose m$ $(m,k)$-set consensus output $p_i$ returns it can return
an id of a processor $p_j$ such that $p_j \not \in S_i$.\\
}

\section{The cumulative set-consensus power of $j$ processors using $(m,k)$-objects}

\begin{theorem}
Given $(m,k)$-set consensus $j$ out of $n$ processors can implement $k \lfloor j/m \rfloor + min(k,j \mod m)$-set consensus.
\end{theorem} 
\begin{proof}


Consider we knew who the $j$ processors are.
W.l.o.g. \cite{OBJECTS} we assume soft-wired objects, i.e. the software controls that at most $m$ processors will invoke the $(m,k)$-set consensus.  To implement  $k \lfloor j/m \rfloor + min(k,j \mod m)$-set consensus processors rank themselves. We then arrange the objects in order, the first object covers the lowest $m$ ranked processors etc., and a processor invokes the object that covers the range of ranks that include its rank.

To ``know'' $j$, processors march through \emph{layers} 1 up to $n$. A processor starts at layer 1 with its id as input id.
Inductively it writes at layer $i$ all the ids it encountered at layer $i-1$, and takes a snapshot of all the ids written. If the number of distinct ids written is $i$,  it invokes the appropriate object, at layer $i$, with the inductive output id from layer $i-1$, obtains an input id, and writes the input id in shared memory. It then looks back at the number of arrivals to the layer. If it is still $i$, it departs with its input id. If it is larger than $i$, it continues with the input id it got from the object, and all the ids it encountered, to layer $i+1$.

On the other hand, if at layer $i$ the number of ids it observed is larger than $i$ it continues to layer $i+1$, with either its input id to layer $i$, in case it did not see an input id written at layer $i$ , else, it adopts an inout id written at layer $i$, as its input id to layer $i+1$.

The algorithm appears in figure \bf{Algorithm 1}. It works since if any processor returns it has seen the registration cardinality unchanged. Correspondingly processors that comes later will adopt a value from that layer. A concurrent processor that failed to see the registration cardinality unchanged obviously has a value from this layer. Since we go from 1 to $n$, some processor must return sometime. Notice the invariant that the value $IdSeen$ written to a layer is of cardinality greater equal to the layer index.
\end{proof}
\begin{algorithm}[t]

{\small
Shared Array $C1[1\ldots n,1 \ldots n]$ initialized to $\emptyset$\;
Shared Array $C2[1\ldots n,1 \ldots n]$ initialized to $\emptyset$\;
Local $IdSeen$ set of processors id, $InId$ input id, both initialized to $\{MyId\}$ and $MyId$, respectively\;
\
\
\\
%
%
%
%
       \For{ $j=1$ to $n$}{
      {$C[j,1]:=IdSeen$\; }
           {$Snap:= \cup_l~C1[j,l]$\;}
                            \eIf {$|Snap|=j$}{ { $InId:=$ Invoke with $InId$ the object according to $MyId$ rank in $Snap$\;}   
                {$C2[j,i]:=InId$\;}
                {$IdSeen=\cup_l~C1[j,l]$\;}
               {\bf{If} $|IdSeen|=j$ then return $MyId$\;}}
               {\bf{If} $\cup_l~C2[j,l] \not = \emptyset$ then $InId:=$ element of $\cup_l~C2[j,l]$\;}
               
              
               
        }
}

\caption{ Extracting cumulative power of soft-wired objects.}
\label{alg:lin}
\end{algorithm}

\subsection{ Implementing the Cumulative set-consensus power in $\cal{RMK}$}

Notice that the algorithm above is layered. I.e. each layer uses ``fresh'' objects, be it set consesnsus objects or read-write registers. This means that every layer can be simulated \cite{BG97,SERGIO-SIM} in $\cal{RMK}$ in some fixed number of iteration:
A processor that moved from layer $i$ to layer $i+1$ does not `interfere'' any more with writes at layer $i$. To see that the $(m.k)$-objects do not need to be persistent objects we make a further observation.   Each layer can be further partitioned into three phases: 
\begin{enumerate} 
\item The first phase is a read-write phase in which a processors writes the ids it encountered so far and takes a snapshot,
\item At the second phase a processor invokes an $(m,k)$-objects with it id, and
\item The third phase is again read-write phase in which a processor writes the id returned to it by the $(m,k)$-object, and look back at the ids now written in the first phase.
\end{enumerate}

Each of the three phases can be bounded a priori by some constant number of iterations. Thus the boundary of the middle phase is well defined and that is where processors in $\cal{RMK}$ simulate their invocations of the $(m,k)$-set consensus objects. Notice that in this phase processors in $\cal{RMK}$ in fact implement the soft-wired objects from hard-wired objects.

\section{Conclusions}
We have shown the existence and gave a constructive algorithm for the sub-IIS model that corresponds to any set-consensus objects. We remark in passing that generalizing this to any combination of such objects is straight forward. This adds evidence to support the quest of having \emph{canonical} distributed model, namely,  some subset of IIS runs. Supporting that any distributed computing system can be captured by a subset of IIS.

We showed further evidence to another Thesis, equating wait-free SWMR-SM equipped with any task, to the IIS model equipped with same. Namely, it was shown for 0-1 family of tasks, and now we enlarged to to consensus tasks.  Surprisingly, it is still unknown what is the sub-IIS model for the 0-1 family of tasks. 

Finally, and probably the least foundational but the most sexy part of the paper is the subsection that shows a new possibility of design of algorithms when consensus objects are available. In fact, we plan in the future of reproducing all the algorithms in Common2 \cite{Common2}, in that spirit.


\ignore{
We consider the following type of models: Model $M(m,j)$ on $n$ processor is a Single-Writer Multi-Reader (SWMR) shared-memory communication model \cite{INTER1,INTER2} with special liveness condition.
Let the number $BG(k,m,j)$ be defined as $BG(k,m,j)= j(k \div m) + min(j, k \mod m)$. Let $P$ be the set of participating processors, and let $PP$ be the subset of processors in $P$ which have not output yet. Then of the $BG(k,m,j)$ processors with highest ids in $PP$ at least one processor is guaranteed to progress. Alternatively, out of $PP$ at least  $|PP|-(BG(|PP|,m,j) -1)$ processors are guaranteed to be alive. 

A processor is participating when it writes its cell for the first time (usually writes its input).
A processor terminates when its output value is available.

The motivation for the model is the idea that with $(m,j)$-set consensus \cite{chaudhuri} which allow 
$m$ processors to end up with at most $j$ distinct outputs, then $k$ processors
can solve $(k, BG(k,m,j))$-set consensus \cite{LIZ}. .

When we are given $(n,1)$-set consensus all processors in $PP$ can agree on the next step for each of the processors in $PP$, and therefore can wait-free simulate a system in which all
processors in $PP$ are alive. Or alternatively, the processor in $PP$ with the highest id is alive.  The suggested implication of \cite{LIZ} is that this can be generalized to any ``narrowing'' power. Yet, as said in the introduction, this is not the subject
of this paper. This paper just investigates the models $M(m,j)$ in the abstract. In fact, it specializes to $M(2,1)$.

The aim of the paper is show how SWAP and DIS can be solved in $M(2,1)$.  
These tasks are now formally defined:

\noindent SWAP:
\begin{enumerate}
\item There exists a permutation $\pi(P)$ such that processor 
$\pi_1 (P)$, returns $\bot$ while for $i=2, \ldots, |P|$ processor $\pi_i (P)$ returns $\pi_{i-1} (P)$.
\end{enumerate}

\noindent DIS:
\begin{enumerate}
\item Processor $p \in P$ returns an Immediate-Snapshot $IS_i$,
\item For all $p \not = q,~ S_p \not = S_q$.
\end{enumerate}

For self containment  we provide the definition of the other tasks we employ.

\noindent Immediate Snapshots (IS) \cite{IS}:
\begin{enumerate}
\item Processor $p \in P$ returns $IS_i, ~IS_i \subseteq P$,
\item For all $p,q \in P$, $p \in IS_q$, or $q \in IS_p$,
\item If $p \in IS_q$ then $IS_p \subseteq IS_q$.
\end{enumerate}

\noindent Adaptive Renaming$(f(k))$ \cite{AR}:
\begin{enumerate}
\item For $p \in P$, $p$ returns a unique integer in $\{1, \ldots, f(|P|)\}$.
\end{enumerate}

As mentioned, solving tasks in $M(2,1)$ should be a variation of wait-free technique.
In the case of SWAP our technique is a variation of Adaptive Renaming \cite{AR}.
In the case of DIS our technique amounts just to a new free-for-all way of executing
the Immediate Snapshots algorithm in \cite{IS}. 

The task SWAP motivate the following:

\noindent Loopless Adaptive Renaming$(f(k))$:
\begin{enumerate}
\item For $p \in P$, $p$ returns a unique integer in $\{1, \ldots, f(|P|)\} \cup \{\bot\}$,
\item Associate processors with integers in $\{1, \ldots, f(|P|)\}$: The highest rank $p \in P$ is associated
with $f(|P|)$, the second highest rank associated with $f(|P|-1)$,etc. then the directed graph $(G=V,E)$,
where $V=\{1, \ldots, f(|P|)\} \cup \{\bot\}$, and there is an edge $(w,y)$ if the processor associated with the
$w$ returned $y$, is loop less.
\end{enumerate}

\begin{proposition}
Loopless Adaptive Renaming$(k)$ = SWAP.
\end{proposition}

\begin{proof}
Obviously a solution for SWAP is a solution for Adaptive Renaming$(k)$.\\
For the other direction notice that some processor must output $\bot$, since else we have a directed graph
with each node has an outing edge, hence a cycle. Let $q$ output $\bot$. Inductively, some other processor must
output $q$, else in the graph induced by $V-\{q\}$ all nodes have an outgoing edge, and consequently a cycle. Etc. 

\end{proof}

\begin{proposition}
With no processor allowed to return $\bot$, loop less adaptive renaming$2k-1$ is not wait-free solvable
for $n>1$,
even when we allow cycles when some processor returns an integer less than $n$.
\end{proposition}

\begin{proof}
It means that the processors in $P$, $|P|=n$ cannot return together the last $n$ integers,
since a one-to one map from a set to itself must contain a loop.
According to \cite{01}, this amounts to set-consensus.
\end{proof}

Once a processor posts an input, from there on the execution in $M(2,1)$ is moved forward by a dynamic subset of the processors that behave like a $BG$ simulators. The $BG$ simulators are the $\lceil |PP|/2 \rceil$ highest id processors
in the active set  $PP$. Abusing notations, we call the set of the $BG$ simulators, the $BG$ set. Any processor in $PP$ which is not in $BG$ does not take steps unless it observes it is in $BG$. Thus currently a non-simulator which has been a simulator before may block an agreement.
To remove such processor from blocking agreement we need an restartable $BG$-module. Such a module was proposed in \cite{EBG}. For completeness, we give here a simpler version.

\subsection{Restartable-$BG$}

We present here safe-agreement as a task. In the literature safe-agreement is specified operationally \cite{BGSERGIO}.

\noindent Safe-Agreement (SA) Task:

\begin{enumerate}
\item Processor $p \in P$ outputs $\bot$ or its input $v_p$,
\item At least on processor $p \in P$ does not output $\bot$,
\item All processors that do not output $\bot$ output the same $v_q, ~q \in P$.
\end{enumerate}

The SA task can be solved wait-free \cite{BG}. A wait-free solution to SA is a {\em SA-module}.
A SA-module also asks processors to post their output in Shared-Memory. The implication is
that either $p \in P$ that terminated knows an non-$\bot$ output $q$ to SA, or if not, there is at least one processor $p' \in P$
that has not terminated the SA-module. This idea that either processors know the value of the election
in the SA, or otherwise one processor $p'$ invoked SA but has not returned (or returned but did not write its return,
which is always the next to do after a return) from SA, and consequently blocked from executing.
We call a processor that invoked SA, did not post an output, and an output $q$ is not available, a processor
that is blocking or stalling the SA. 

In this paper we need a {\em Restartable-Safe-Agreement (RSA)}.
An RSA will allow for a $BG$ simulator to prevent non-$BG$ simulators from stalling
an SA. Thus w.l.o.g we assume that if a $BG$ processor observes a non-$BG$ processor stalling an SA,
it restarts that SA. Thus at any point in time, the number of stalled $SA$ is bounded by $|BG|-1$. 

\noindent A Restartable-Safe-Agreement (RSA) has the following semi-task specification:

\begin{enumerate}
\item Processor $p$ can invoke the same RSA multiple times in order: $invoke(1), invoke(2), \ldots$,
\item The spec of output of RSA is that same as SA: All processor that output $q \not = \bot$ output same. It is different than SA in its ``livenss,'' 
\item If the highest index invocation is $invoke(m)$ then if all the processors that executed $invoke(m)$ returned, then at least one of them return $q \not = \bot$.
\end{enumerate} 

RSA is more powerful than SA. If $p$ executed $invoke(1)$ and afterwards really crashed, then if live processors execute $invoke(2)$ they are not blocked by $p$, from possibly not seeing a non-$\bot$ output. On the other hand safety is maintained: if $q$ did not execute $invoke(2)$ and returned with a non-$\bot$ value, that must be the value observed as the outcome of the RSA. By other processor executing $invoke(2)$ it allows $p$ to depart the RSA with a $\bot$ output if necessary.    

\subsubsection{ The implementation of Restartable-Safe-Agreement}

An implementation of an RSA-module is given in \cite{EBG} by the extended-$BG$ module. The extended $BG$ module
relies on Commit-Adopt \cite{CA}. Here we give an alternative simpler implementation. The RSA implementation uses a sequence of SA's $SA_1,SA_2,...$. A processor considers the RSA to have been resolved with a value $q$, when it observes any $SA_i$ having a non-$\bot$ $q$ output and no $abort$ posted for $SA_i$. To execute $invoke(m),~m>1$  processor $p$, that has did not see the RSA resolved,
posts $abort$ for $SA_{m-1}$. It then observes the outputs
posted for all $SA_{m-1}$. if it sees a non-$\bot$ output $q$ it invokes $SA_m$ with $q$, else, it invokes $SA_m$ with its own id.

\section{The Algorithm for SWAP}

As in Adaptive-Renaming \cite{AR} a processor $p$ ``captures'' processor $q$, or $\bot$, by setting an array entry $FLAG[p]:=v$ (raising a flag for $v$),  where $v=q$ or $v=\bot$. In difference with renaming, we do not have processors capturing integers. We have processor capturing processors.  Yet, at any point in time, by ranking processors id's we can view them as integers $1,\ldots |P|$ capturing from $\{1 , \ldots , |P|\} \cup \{\bot\}$.

We notice that if $rank(p)> rank(q)$ then this relative ranking is invariant over the changing participating set, as well as over the changing active set. To prevent
forming a cycle of ``capture,'' we will maintain the invariant that if processor $p$ set $FLAG[p]:=q$, then the directed path of capture $q,w, \ldots,v$ is such that $rank(p)>rank(v)$.
This directed path is of length 0 then $rank(p)>rank(q)$.

After $p$ sets  $FLAG[p]:=v$ it reads all other flags to verify that it is alone to have set $FLAG[p]:=v$. If it is, it has captured $v$ and it terminates. Else it it updates itself as to which  are the processors that invoked the task by now. It observes the flags, some of whom are permanent and some of whom are tentative, and decides what will be the next setting of $Flag[p]$.

Of course, all the above is mediated by $BG$ processors through RSA's. For a $BG$ processor a permanent flag is a resolved RSA that says the $p$ observed to be the only processor to have set $FLAG[p]:=v$. A tentative flag is an 
unresolved RSA of whether $p$ observed itself to have alone set $FLAG[p]:=v$. Notice that for each processor we have two types of RSA's. One is to set $FLAG[p]=v$, which as with $BG$ is a resolution of the former read: What graph did $p$ read. After a flag setting is resolved, there is another RSA to decide whether $p$ observed itself alone with a flag raised for $v$ or not.

When the state of the system is observed by a $BG$ processor we associate the observation with a directed graph. 
Let $P$ be the participating set. Construct a graph $G_o=(V_o,E_o)$. The nodes of $G_o$ are elements of $\{P \cup \{\bot\}\}$. There is an edge $(w,y)$ from node $w$ to node $y$ if processor $w$ returned $y$ (i.e. a permanent edge 
from $w$ to $y$). Inductively, the underlying graph is a forest, where trees are line graphs. Considering the directions, trees are directed paths. The number of these permanent edges is $|P|-|PP|$. A directed path has a head-node and a tail-node. Inductively,  each node in a path is of higher id than the head-node on the path.

We now construct a graph $G=(V,E)$ out of $G_o$ by compressing a path to its head-node. This is a graph such that $V=\{ PP \cup \{\bot \}\} $. We will consider the algorithm as proceeding in $G$ although for $p$ to capture $y$ it needs to capture the tail-node of $y$ in $G_o$. Let $|PP|=k$. Let $|BG|=\lceil k/2 \rceil$.

the $BG$ simulators now execute essentially adaptive renaming where the set of processors (when we identify them now with integers of their corresponding to their ranking) is $\{|PP|, |PP|-1, \ldots |PP|- (|BG|+1)\} $ and the set of integers they rename into is $\{ \bot, 1,2,\ldots, |PP|-1\}$. To maintain the invariant that flags are pointed from high to low we just invert to rule in \cite{AR}. For processor $|PP|$ we reserve the first unoccupied position going down, starting from integer $|PP|-1$, etc. 

To see that the invariant of raising a flag from high to low is maintained, notice since the tentative flags eliminate at most $|BG|-1$ integers or $\bot$, and $|\{ \bot, 1,2,\ldots, |PP|-1\}| \geq 2|BG|-1$, then at the ``worst case'' processor $p$ will
raise a flag for $p-1$. Similarly to see that we have progress follows again from the fact that $|\{ \bot, 1,2,\ldots, |PP|-1\}| \geq 2|BG|-1$. 

The proof now follow verbatim from \cite{AR}.

\section{ The Algoarithm for DIS}

We set $n$ levels, as the number of processors.
With each level $l$ we associate a sequence of RSA's $RSA_1,RSA_2,\ldots $ ($l^2$ of them will be more than enough).
We now execute the participating-set algorithm (immediate snapshot) in \cite{IS} just taking advantage that the number of $BG$ simulator is smaller than the number of processors.

A processor $p$ starts participating by setting an array entry $Level[p,n]=true$. This array is initialized to $false$.
We say that $p$ is at level $l$ if $Level[p,l]=true$ and $Level[p,l-1]=false$.
From there on control move to $BG$ processors. When processor $p$ sees it is the only one at level $l$, and the set $S(l)$ of processors at level $l$ and below is such that $|S(l)|=l$, it terminates with $S_p:=S(l)$.

To set $Level[p,l]=true$ at least one RSA at level $l+1$ has to be resolved with the outcome $p$. As long as a $BG$ simulator sees a processor $p$ at level $l$ and the $|S(l) | <l$, it wants to see $|S(l)|-|S(l-1)|$ unresolved RSA's at level $l$. Inductively the number of unresolved RSA at level $l$ is not larger. To attain this value if it is currently below it, the simulator invokes the next RSA that has not been invoked yet in the sequence, proposing an arbitrary processor at level $l$.

When $|S(l)|=l$ the $BG$ simulator wants to see at most $l-|S(l-1)+1$ unresolved RSA's at level $l$.
Inductively, the number of unresolved RSA's will not be more. If it sees less, it activate the next RSA in the sequence until the number of unresolved $RSA$ matches $l-|S(l-1)+1$. 

Notice that for all  $p$ in $PP$,  $p$ is either: 1. Alone at level $l$ but $|S(l)<l$, or 2. It is together with at least another processor in level $l$. In case 1, we have at least one $BG$ simulator accounted for against a single processor. Thus the interesting case is case 2. There, for $n_l$ processors at level $l$ we might have only $n_l-1$ $BG$ simulators accounted for by unresolved RSA's at that level.
The observation is that when we sum on levels $i$ for which $n_i >1$ then $\sum_i (n_i-1) > |BG|-1$.  

\section{Conclusions}

In this paper we set out to test the possible implications of the conjecture that with $(m,j)$-set consensus one can implement the $M(m,j)$ model. 
Once we started we were surprised by how challenging the test was, but in the same vein surprised
by how we felt pushed to think in a structured manner.
In fact this feels like it is really the case of ``the first step on the way'' to streamlining solutions in models where consensus is available. Essentially we showed that for any task to be solved with $(m,j)$-set consensus one can extract a wait-free task. As a consequence we observed a new interesting wait-free task, as well as produced completely new solutions to long standing tasks.

With this paper in the background, attempting to prove the conjecture is motivated. Explaining all these implications in a paper that proves the conjecture, heavy with the detailed machinery of such a proof, is a gross violation to the mantra of ``one paper one idea.'' 
Without understanding the implications, a paper that proves the conjecture can be dismissed as yet another ``the $red$ model implements the $blue$ model.''

We hope we have made the case that it is worthwhile now to prove that $red$ implements $blue$, as well as that we have made ``the first step'' to somewhere of more relevance than just providing new solutions to DIS and SWAP.
Last but not least, albeit we did not get a new fish, but one that was put back in the water, producing solutions to tasks previously solved. Though the fish was weakened we nevertheless showed how to fish. And, paraphrasing a popular bumper sticker, if you got as far as reading this sentence, we are quite confident you are a better fisher for that.

{\bf Acknowledgement:} I was prodded (with spurs!) by Rachid Guerraoui to make the case as to why executing read-write codes via State-Machines \cite{SM} (i.e. the conjecture) is important, as a pre-condition to proving the conjecture.
   
}

\newpage

\appendix
\noindent Appendix
\\
\\
\\
\ignore{
The ramification of the trasformation of teh consensus objects to runs is that instead of writing a program where the program calls on teh objects one can write a straight line read-write program keeping in mind it will be executed with runs in the set built by the trasformation.

We show simple proof of concept example of teh ``new-line of algorithms design'' this opens by solving tight-adaptive-renaming \cite{} when $(2,1)$-set consensus objects are available.

Alternatively we show how to ``compile'' a program that calls on the objects to be executed by the run-model without access to the real objects.

Thus, it shows that at least for consensus objects, objects can be used in a generic standard manner.
Another ramification is that any program writen for a model where the objects are presitent, can be run in a model where objects instances are ephimeral - an object just leaves in a round. It is not available before and is not available after.
Such a replacement have been show in \cite{} for the 0-1 famility of tasks \cite{}.
} 
\section{Sub-IIS models}
\label{sec:models}

In this section, we describe our perspective on the \emph{Iterated
  Immediate Snapshot} (IIS) model~\cite{BG97} and give examples of
sub-IIS models. 

\subsection{The IIS model} \label{sec:IIS}
Our base model is the IIS. It consists of an infinite sequence of the IS tasks \cite{} $IS_1,IS_1, \ldots$. Processors  start by submitting their inputs to $IS_1$ and subsequently taking the output as the input to the next IS in the sequence. 

Let $\Runs$ be the set of runs in IIS. A processor is participating if it went through $IS_1$. A processor is live, if it  went ad-infinity.

Some processors might not be seen by other, or not seen infinitely often. This may allow to remove (a suffix) of their appearance in a run $r$ and still leave some processors unaware that we did this surgury. It is easy to see that this surgury has a well defined unique ``skeleton.'' The set of processors that are live in the skeleton $sk$ is called $\fast(sk)$. Since the skeleton is unique the set $\fast(r)$ will denote the fast set of the skeleton of $r$. In a run $r$, processor that are not in the $\fast(r)$ are in $\slow(r)$. 

\subsection{Examples of models} We define a {\em sub-IIS model} $M$ to be any subset of $\Runs$.

\begin{example}
\label{ex:wf}
The {\em wait-free} (or {\em completely asynchronous}) model $\WF$ is
the set $\Runs$ itself. The interpretation of $\WF$ is that anything
can happen (all sorts of step interleavings are allowed). 
\end {example}

\begin{example}
\label{ex:res}
For $t\leq n$, the {\em $t$-resilient model} $\Res_t$ consists of the runs $r \in \Runs$ such that $|\fast(r)| \geq n+1-t.$ This is the model in which at most $t$ processes are slow. 
\end{example}

\begin{example}
\label{ex:obstruction-free}
For $k\leq n+1$, the {\em $k$-obstruction-free model} $\OF_k$ consists
of all the runs $r$ in which no more than $k$ processes are fast,
i.e., $|\fast(r)| \leq k.$  This model was
previously discussed in \cite{Gaf08-concurrency},  following a suggestion of Guerraoui. \end{example}

\begin{example}
\label{ex:adv}
More generally, consider the {\em model with adversary $\A$} \cite{DFGT11}, which we denote by $M^{\adv}(\A)$. Here, $\A$ is any subset of the power set of $\{0, 1, \dots, n\}$.  We then define $M^{\adv}(\A)$ to consist of all runs $r$ such that $ \slow(r) \in \A$. 
\end{example}

\section{Topological definitions}
\label{sec:topdef}

We assume the reader is familiar with by now standard terminology used in Distributed Computing of Chromatic Complexes, Subdivided-Simplexes, etc

We denote by $\Chr^k \s$ the $k$'th iterated subdivision of the simplex $\s$, and by $|\Chr^k \s|$ we denote its some standard embedding in $R^n$. Since every simplex of $\Chr^k \s$ is a partition of $\Runs$ by prefixes if we continue this process to infinity we get that every point in the embedding of $\s$, $|\s|$, is a unique subset of $\Runs$. All runs at a point share the same skeleton.

\section{Tasks}
\label{sec:tasks}

\subsection{Definitions}
\label{sec:deftasks}

A {\em task} $T = (\I, \O, \Delta)$ on $n+1$ processes $\{p_0, \ldots,
p_n\}$ consist of two finite, pure $n$-dimensional chromatic complexes $\I$ and $\O$, together
with a chromatic multi-map $\Delta: \I \to 2^{\O}$. The {\em input complex} $\I$ specifies the possible input values, the {\em output complex} $\O$ specifies the possible output values, and $\Delta$ describes which output values are allowed for a given input. The colors specify to which process each input or output value corresponds.

A task is called {\em input-less} if the input complex is the standard
simplex $\s$, colored by the identity. 
Then each process starts with input only its own id.\footnote{Note that in the definition of a multi-map we allowed images to be empty. This is somewhat non-standard, as it means that processes in a task do not have to output. If one prefers to avoid that, for every task $T=(\I, \O, \Delta)$ we can construct a new, equivalent task $T^+ = (\I^+, \O^+, \Delta^+)$ as follows. We let $\I^+=\I$. The output complex $\O^+$ is obtained from $\O$ by adding extra vertices $v_0, \dots,v_n$ (with $v_i$ corresponding to ``no output'' for the process $i$); moreover, for each simplex $\sigma$ in $\O$, we add an $n$-simplex $\sigma^+$ in $\O^+$ by adjoining vertices $v_i$ for the colors $i$ not represented in $\sigma$. Finally, we let $\Delta^+(\tau) = (\Delta(\tau))^+$.}

\subsection{Affine tasks}
Many examples of input-less tasks can be constructed as follows. Let $L \subseteq \Chr^k \s$ be a pure $n$-dimensional subcomplex of the $k\th$ chromatic subdivision of $\s$, for some $k$. For each face $\t \subseteq \s$, the intersection $L \cap \Chr^k \t$ is a subcomplex of $\Chr^k \s$; we assume that this subcomplex is pure of the same dimension as $\t$ (and possibly empty).  

We define an input-less task $(\s, L, \Delta)$ by setting 
$\Delta(\t) = L \cap \Chr^k \t$ for any face $\t \subseteq \s$.  Tasks constructed like this are called {\em affine}. To depict an affine task, we can simply draw the corresponding complex $L$.

By abuse of notation, we will usually write $L$ for the affine task $(\s, L, \Delta)$. We chose the name {\em affine} because if we have a task $L$ as above, the geometric realizations of the simplices of $L$ can be depicted as lying on affine subspaces of $\R^n$. Similar terminology appears in algebraic geometry, where one talks about affine varieties.

\subsection{Task Solvability}
In a sub-IIS model, informally, a task $T = (\I, \O, \Delta)$ is solvable in $M$ if for all runs $r \in M$, the infinitely participating processes output, and their output is a subsimplex of the allowed outputs for the participating processes. An output is the result of a {\em protocol}. For us, when dealing with solvability rather than complexity, a protocol is just a partial map from views to outputs. Thus, requiring an infinitely participating process to output means requiring that eventually it will have a view that is mapped by the protocol to an output value.

We define the set $\V= \V(\I)$ to consist of all possible $\view(p_i, \omega, k)$ in all runs $r \in \Runs$, for all processes $p_i$, simplices $\omega \in \I$, and integers $k \geq 0$. Formally, a protocol $\Pi$ for the task $T$ is a map from a subset of $\V$ to the set of vertices in the output complex $\O$.

\begin{definition}
A task $T = (\I, \O, \Delta)$ is {\em solvable} in a sub-IIS model $M$ if there exists a protocol $\Pi$ for $T$ such that for all $r \in M$ (with $r=S_1, S_2, \dots$ as before):
\begin{enumerate}
\item For each $p_i$, and for each $n$-dimensional simplex $\omega \in \I$, there exist $k_0$ and a vertex $v$ of $\O$ colored $i$, such that:
\begin{itemize}
\item For all $k <k_0$,  $\view(p_i, \omega, k) \notin \textit{domain}(\Pi)$;
\item For all $k \geq k_0$ such that ${p_i} \in S_k$ exists, we have $\Pi(\view({p_i}, \omega, k))=v$.
\end{itemize}

\noindent (This condition is satisfied vacuously if $p_i$ is not infinitely participating, because we can find $k_0$ such that $p_i$ did not take $k_0$ steps in $r$, so $p_i \not \in S_k$ for $k \geq k_0$.)
 \vskip5pt
 
\item For all $k$,  $\{\Pi(\view(p_i, \omega, k)) \mid \view(p_i,\omega, k)\in \textit{domain}(\Pi)$\} is a sub-simplex of a simplex in $\Delta\bigl (\omega \cap \chi^{-1}(\part(r)) \bigr)$.
\end{enumerate}
\end{definition}

In every run $r\in M$, condition (1) above requires every infinitely
participating to eventually produce an output, and condition (2)
requires the produced output to respect the task specification
$\Delta$ given the inputs of participating processes.   

\subsection{Safe-Agreement (SA) Task}

We present here safe-agreement as a task. In the literature safe-agreement is specified operationally \cite{BGSERGIO}.


\begin{enumerate}
\item Processor $p \in P$ outputs $\bot$ or its input $v_p$,
\item At least on processor $p \in P$ does not output $\bot$,
\item All processors that do not output $\bot$ output the same $v_q, ~q \in P$.
\end{enumerate}

The SA task can be solved wait-free \cite{BG}. A wait-free solution to SA is a {\em SA-module}.
A SA-module also asks processors to post their output in Shared-Memory. The implication is
that either $p \in P$ that terminated knows an non-$\bot$ output $q$ to SA, or if not, there is at least one processor $p' \in P$
that has not terminated the SA-module. This idea that either processors know the value of the election
in the SA, or otherwise one processor $p'$ invoked SA but has not returned (or returned but did not write its return,
which is always the next to do after a return) from SA, and consequently blocked from executing.
We call a processor that invoked SA, did not post an output, and an output $q$ is not available, a processor
that is blocking or stalling the SA.

\ignore{
\section{The $\cal{RMK}$ sub-IIS model}

We first present the task $(m,k)$-set consensus among any $m$ processors as an \emph{affine} complex $C(n,m,k)$. An Affine complex is a subcomplex of a chromatic finitely-subdivided simplex $A$.
To create $C(n,m,k)$ we first describe how we create $C(m,m,k)$, i.e. what is the subcomplex we talk about when the number of processors $n=m$. The complex   $C(m,m,k)$ is a subcomple
of $Chr^2 ({\bf s^{m-1}})$, the second standard chromatic subdivision of the $m-1$-dimensional simplex ${\bf s^{m-1}}$. To get $C(m,m,k)$ we purge from $Chr^2 ({\bf s^{m-1}})$ all the simplexes that are not part 
of an $m-1$-dimensional simplex that has at least one vertex on a face of $Chr^2 ({\bf s^{m-1}})$ of dimension $k-1$. I.e. we hallow out $Chr^2 ({\bf s^{m-1}})$ of all simplexes that do not touch a $k-1$-dimensional face.
The observation we make leaving out the proof is that any of teh remaing simplexes there is no simplex that touches two distinct  faces of dimension $k-1$ or less. Thus every simplex ``identifies'' exactly a single face
of $k-1$ or less.

\begin{lemma}
Consider the runs that corerspond to $C(m,m,k)$, then in a model of these runs we can solve $k$-set consensus among $m$ processors, and $C(m,m,k)$ when considered as a task is solvable in a read-write wait-free $m$ processors SWMR memory with access to $(m,k)$-set consensus.
\end{lemma}

\begin{proof}\\
$\Rightarrow$\\
For every run in a simplex of of $C(m,m,k)$ after two Immediate Snapshots a processor obtains a vertex of its color in $C(m,m,k)$. It then returns an id from the smallest cardinality face of all the simplexes which contains it.\\
$\Leftarrow$\\
Processor use $(m,k)$ set consensus to determine at most $k$ corners of $C(m,m,k)$. They then execute convergence \cite{} algorithm to return a simplex of $C(m,m,k)$.
\end{proof}

To construct $C(n,m,k)$, we consider of the $n \choose m$ combinations of $m$ processors in some order, $comb_1, \ldots, comb_{n \choose m}$.  We take the $n-1$-dimensional simplex ${\bf s^{n-1}}$ and take
the face that corresponds to $comb_1$. We subdivide it according to $C(m,m,k)$ and cone-off this subdivision with the rest of teh vertices not in $comb_1$. Now we got a complex which is a subcomplex of a colored 
subdivided simplex. We take all teh faces of simplexes in teh subdivision that correspond to $comb_2$. We subdivide each such face according to $C(m,m,k)$ and then in each simplex cone this subdivision off with the rest
of teh vertices. We continue this for all combinations. 

$C(n,m,k)$ can be completed to a colored subdivision of ${\bf s^{n-1}}$. As a wait-free solvable task there exist $k$ such that $Chr^k  {\bf s^{n-1}}$ solves the task. We now consider all the simplexes of $Chr^k  {\bf s^{n-1}}$ that map to $C(m,m,k)$. W.l.o.g. we consider this complex to be $C(n,m,k)$.

\begin{lemma}
Consider the runs that corerspond to $C(n,m,k)$, then in the model of these runs we can solve $(m,k)$-set consensus among all $n$ processors, and $C(n,m,k)$ when considered as a task is solvable  in a read-write wait-free $n$ processors SWMR memory with access to $(m,k)$-set consensus.
\end{lemma}
\begin{proof}
Since each simplex of $C(n,m,k)$ resides in a simplex we built before we approximated these simplexes by simplexes from $Chr^k  {\bf s^{n-1}}$ we can just identify back the process of the subdivision we described above and in turn each processor can answer its choice of value for each $m$ combination it belongs to.
\end{proof} 

To create $\cal{RMK}$ we now iterate $C(n,m,k)$ ad-infinitum.

It is not too difficult to see that vertices of simplexes in $C(n,m,k)$ output atomic snapshots. Each $p_i$ outputs a snapshot $S_i$ which is the carrier of the face of ${\bf s^{n-1}}$ that carries the vertex of its color, and the carriers
of simplexes of $Chr^k  {\bf s^{n-1}}$ are related as snapshots. Moreover, obviously, by the logic of our process of creating $C(n,m,k)$, in none of the $n \choose m$ $(m,k)$-set consensus output $p_i$ returns it can return
an id of a processor $p_j$ such that $p_j \not \in S_i$.\\
}
\ignore{
\subsection{The cumulative set-consensus power of $j$ processors using $(m,k)$-objects}

\begin{theorem}
Given $(m,k)$-set consensus $j$ out of $n$ processors can implement $k \lfloor j/m \rfloor + min(k,j \mod m)$-set consensus.
\end{theorem} 
\begin{proof}


Consider we knew who the $k$ processors are. To implement  $k \lfloor j/m \rfloor + min(k,j \mod m)$-set consensus we rank them and take the first task to be used by the first $m$ lowered rank processors, then another one for the second $m$, etc\\
When we know there are $j$ but we do not know who they are, we just arrange the above commination for each 
of the $n\choose{ j }$, we put them in some order and a processor goes through in order taking the output from one as an input to the next \cite{OBJECTS}.\\
When we do not know $j$ we do the above from $j=1$ to $j=n$. a processor marches from 1 up to $n$.\\

The observation in all these cases is that if we have some $j$ they all met in the corresponding task.

\end{proof}

The down side is the algorithm is exponential.

Obviously the algorithm can be implemented by taking enough chunks of rounds that solve $(m,k)$-set consensus among all the $n \choose m$ processors combinations.

\subsection{Implementing $\cal{RMK}$  in IIS or wait-free SWMR-SM with $(m,k)$-objects}
Obviously the process by which we created $\cal{RMK}$ can be simulated in both models.

\section{Executing a SWMR-SM protocol $\Pi$ invoking $(m,k)$-objects in a model $\cal{RMK}$ without access to the objects}

We propose a new simulation scheme to run a SWMR-SM wait-free (i.e. each thread is a staright line code of alternate reading and writing) protocols without access to objects over 
the IIS model. We then amend the scheme to the case of threads invoking $(m,k)$-objects.

We use the multi-state-machine replication approach in \cite{}.

We are given a protocol $\Pi(wf)$ on $n$ processors. We take $n$ processors to execute $\Pi(wf)$ in IIS. The processors execute $\Pi(wf)$ in a very elaborate manner,
allowing free for all execution. Once an input to some thread is known to a processor it can try to participate in teh execution of that thread. I.e. The association between 
processors and threads is only through the fact that processor $p_i$ has the input to thread $Th_i$.

We use the algorithm in \cite{} for state-machine replication (SMR). This SMR replication algorithm in \cite{} has $n$ threads and is given $(n,n)$-set consensus objects,
that is, trivial objects which we can implement read-write wait-free. We use one of teh simulations \cite{} to simulate read-write SWMR-SM threads of SMR on the IIS model.

What are the commands brought in by the processors in SMR? 

To explain this, we first introduce the SR simulation scheme in a companion paper \cite{}. 
In this simulation we also break the lock of teh association between processors and threads.
The processors use the BG simulation with safe agreement to advance read command of threads.
A thread with no input will be invoked with teh command skip, as well as thread that obtained an output.
As we will justify later processors with no input yet for their threads, and processors whose threads have terminated
are considered to progress in lockstep. That is, the BG processors proceed $k$-asynchronously where $k$ is
the number of active threads. An execution is $k$-asynchronous if every processor executiong a read after its last write
will see at least $n-k$ new writes it did not observe in its last read, of the BG threads.

Thus, the number of active threads, the threads for which the input to $\Pi(wf)$ is known and the threads have not terminated, is also $k$.
By the results of teh SR simulation \cite{}, on of these $\Pi(wf)$ will eventually progress.

Now, the SMR simulation drives the BG threads. In SMR we have rounds and in a round in which all $n$ processors have $j$ distinct
commands then $n-j+1$ threads will advance. That is, we get $j-1$-asynchronous execution.
}

\end{document}